\documentclass[journal]{IEEEtran}

\ifCLASSINFOpdf
\else
   \usepackage[dvips]{graphicx}
\fi
\usepackage{url}
\hyphenation{op-tical net-works semi-conduc-tor}
\usepackage{graphicx, xcolor, amsmath}

% uncomment according to your operating system:
% ------------------------------------------------
\usepackage[latin1]{inputenc}    %% european characters can be used (Windows, old Linux)
%\usepackage[utf8]{inputenc}     %% european characters can be used (Linux)
%\usepackage[applemac]{inputenc} %% european characters can be used (Mac OS)
% ------------------------------------------------
\usepackage{hyperref}
\hypersetup{
    colorlinks = False,
    allbordercolors = {red},   
   }
\usepackage[T1]{fontenc}

\usepackage{amsfonts,amsmath,amsthm,amssymb,mathrsfs}       % blackboard math symbols
\usepackage{mathtools}
\usepackage{xifthen}
\usepackage{cite}
\usepackage[acronym]{glossaries}

\newtheorem{theorem}{Theorem}

\newtheorem{example}{Example}

\newtheorem{standing}{Standing Assumption}

\newcommand{\R}{\mathbb R}
\newcommand{\bbS}{\mathbb S}
\newcommand{\prob}[2][]{
	\ifthenelse{
		\isempty{#1}}{\mathbb{#2}}{\mathbb{#2}\left\{#1\right\}
	}
}
\newcommand{\expected}[2]{\mathbb{E}_{#1}\left[#2\right]}

% For comments
\usepackage[colorinlistoftodos]{todonotes}

%\renewcommand{\ffcomment}[1]{}   %uncomment to disable my comment boxes

%%%%%%%%%%%%%%%%%%%%%%%%%%%%%%%%%%%%%%%%%%%%%%%%%%%%%%%%%%%%%%%%%%%%%%%%
\makeglossaries

\newacronym{LS}{LS}{least squares}
\newacronym{SDLS}{SDLS}{semidefinite least squares}
\newacronym{SD}{SD}{semidefinite}
\newacronym{SDP}{SDP}{semidefinite program}
\newacronym{iid}{i.i.d.\@}{independent and identically distributed}
\newacronym{wrt}{w.r.t.\@}{with respect to}
\newacronym{QP}{QP}{quadratic program}
\newacronym{RKHS}{RKHS}{reproducing kernel Hilbert space}

\newglossaryentry{LMI}
{
	name={LMI},
	description={linear matrix inequality},
	first={\glsentrydesc{LMI} (\glsentrytext{LMI})},
	plural={LMIs},
	descriptionplural={linear matrix inequalities},
	firstplural={\glsentrydescplural{LMIs} (\glsentryplural{LMIs})}
}

\graphicspath{{figures/}}

\begin{document}

%\title{Concentration inequalities for semidefinite least squares and their application to data-driven optimization}
\title{Concentration inequalities for semidefinite least squares based on data}

\author{Filippo Fabiani and Andrea Simonetto
\thanks{This work was partially supported by the ANR-JCJC grant ANR-23-CE48-0011-01.}
\thanks{F. Fabiani is with the IMT School for Advanced Studies Lucca, Piazza San Francesco 19, 55100, Lucca, Italy. A. Simonetto is with the Unit\'e de Math\'ematiques Appliqu\'ees, ENSTA, Institut Polytechnique de Paris, 91120 Palaiseau, France  (e-mail: {\footnotesize \texttt{filippo.fabiani@imtlucca.it}}, {\footnotesize \texttt{andrea.simonetto@ensta.fr}}).}}

\markboth{}{}
\maketitle

\begin{abstract}
We study data-driven least squares (LS) problems with semidefinite (SD) constraints and derive finite-sample guarantees on the spectrum of their optimal solutions when these constraints are relaxed. In particular, we provide a high confidence bound allowing one to solve a simpler program in place of the full SDLS problem, while ensuring that the eigenvalues of the resulting solution are $\varepsilon$-close of those enforced by the SD constraints. The developed certificate, which consistently shrinks as the number of data increases, turns out to be easy-to-compute, distribution-free, and only requires independent and identically distributed samples.
Moreover, when the SDLS is used to learn an unknown quadratic function, we establish bounds on the error between a gradient descent iterate minimizing the surrogate cost obtained with no SD constraints and the true minimizer.
\end{abstract}
\begin{IEEEkeywords}
Data-driven modeling, Optimization, Machine learning.
\end{IEEEkeywords}

\IEEEpeerreviewmaketitle

\section{Introduction}
\IEEEPARstart{L}{east} squares problems are one of the workhorses of signal processing and machine learning, as well as a multitude of other domains. When the underlying problems are equipped with \gls{SD} constraints, we often use the term \gls{SDLS}, whose first occurrence traces back to the 60's~\cite{brock1968optimal}. The resulting constrained programs have a number of key applications in mechanics \cite{woodgate1998efficient,krislock2003numerical,manchester2017recursive}, finance \cite{malick2004dual,boyd2005least}, stochastic control \cite{lin2009least}, and functional estimation \cite{notarnicola2022distributed}. Besides finding widespread application, one of the main challenges in \gls{SDLS} is the presence of the \gls{SD} constraint, which increases significantly the overall computational complexity. 

In particular, throughout this letter we will be interested in \textit{data-driven}, convex \gls{SDLS} of the following general form:
\begin{equation}\label{eq.sdls}
	\begin{aligned}
		&\underset{x\in\mathcal X}{\textrm{min}}&&\frac{1}{N} \|A x - b\|^2+\rho\|x\|^2\\
%		&\textrm{ s.t. }&&Q(x) \le 0, P(x) = 0,\\
%		&&&\Lambda(F(x)) \in [m,L],
		&\textrm{ s.t. }&&\Lambda(F(x)) \in [m,L],
	\end{aligned}
\end{equation}
where a dataset consisting of $N$ samples $\{z^{(i)}\}_{i=1}^N$, with $z^{(i)}\coloneqq(x^{(i)},y^{(i)})$, $x^{(i)}\in\R^n$, and possibly noisy $y^{(i)}\in\R$, populate matrix $A\in\R^{N\times n}$ and vector $b\in\R^N$. 
%Besides mappings $Q:\R^n\to\R^q$ and $P:\R^n\to\R^p$, 
While the convex set $\mathcal X\subseteq\R^n$ introduce inequality and equality constraints, $F:\R^n\to\bbS^\ell$ maps $x$ into a symmetric matrix relation with spectrum $\Lambda(F(x))$ constrained within $[m, L]$. In particular, $F(\cdot)$ amounts to a \gls{LMI}, $F(x)=F_0+x_1F_1+\ldots+x_nF_n$, for given symmetric matrices $F_0, F_1, \ldots, F_n\in\bbS^\ell$. Moreover, while the first part in the cost performs mean squared error minimization, the second one denotes the regularization term with $\rho>0$, making the overall cost strongly convex with a unique optimal solution.

%{\color{red} Finish the above.. } 
Note that one can readily transform~\eqref{eq.sdls} into the associated matrix version by the vectorization operator $\textrm{vec}(\cdot)$ stacking the matrix columns into a vector, i.e., $x=\textrm{vec}(X)$. We can then write
$
\|A x - b\|^2 = \|C X - B\|_F^2, 
$
where $\|\cdot\|_F$ denotes the Frobenius norm, with $b=\textrm{vec}(B)$ and, by exploiting the Kronecker product, $C = I \otimes A$.
Therefore, popular instances that can be immediately recast in the form of \eqref{eq.sdls} are at least two. The first one amounts to the \textit{projection-onto-the-cone}:
$$
\underset{X\succcurlyeq0}{\textrm{min}}~\frac{1}{2}\|X - P\|_F^2,
$$
for a given matrix $P\in\bbS^\ell$, which is typically solved via singular value decomposition and eigenvalue clipping \cite{higham1988computing}. It can be also extended to $\Lambda(X)\in[m,L]$ straightforwardly.
% ($\Lambda(\cdot)$ denotes the set of eigenvalues of its argument). 

A less obvious instance is the \textit{\gls{SD} Procrustes problem}:
$$
\underset{X\succcurlyeq0}{\textrm{min}}~\frac{1}{2}\|TX - P\|_F^2.
$$
When $T\in\R^{\ell\times\ell}$ is full rank, the factorization $X= E^\top E$ is exact and can be used to remove the \gls{SD} constraint \cite{woodgate1998efficient}. 

In general, however, the complexity of~\eqref{eq.sdls} is dominated by the SD constraint. In this letter, we ask the question of \emph{when we can safely remove} the \gls{SD} constraint and thereby transform the \gls{SDP}~\eqref{eq.sdls} into an easier program frequently turning into a \gls{QP}, according to the shape of $\mathcal X$. This can not be done in general if the \gls{SD} constraints are added for the purpose of regularizing the problem. However, when $A$ and $b$ stack noisy samples as in \eqref{eq.sdls}, the ``true'' optimal solution $x^\star\in\mathcal X$ is the one obtained with non-noisy unbiased (possibly infinite) data samples, and $\Lambda(F(x^\star)) \in [m,L]$, then one can safely relax the requirement on the span of $F(x)$, and analyze the distribution of the eigenvalues of $F(x^\star_N)$, where the (random) quantity $x^\star_N$ solves \eqref{eq.sdls} with no \gls{SD} constraints and $N$ noisy samples. 

%{\color{red} Sequential least squares, recursive even .. how to introduce the above ? and when ? }

The question of determining in probability when certain constraints are satisfied, which are so for the true process $x^\star$, goes under the umbrella of concentration inequalities, which offer bounds on how random variables deviate from a certain value (typically, the expectation). Concentration inequalities, also known as tail bounds or finite-sample certificates \cite{krikheli2021finite}, have a long history in statistical learning---see, e.g., \cite[Ch.~1]{bach2024learning}---and they have been recently used for data-driven control and decision-making \cite{hardt2016train,tsiamis2023statistical,ye2024learning,fabiani2025finite} as well as ordinary \gls{LS} \cite{bousquet2000algorithmic}. Here, we will be interested in matrix concentration inequalities aligned to the McDiarmid's one as in \cite[\S 7]{tropp2012user}. Although other type of concentration inequalities tailored for matrices exist, such as Bernstein \cite[\S 6]{tropp2012user}\cite{oliveira2009concentration} or PAC-Bayes \cite{catoni2016pac,minsker2018sub}, none of them captures the deviation between the sampled value and the expected value of a matrix function when evaluated on independent random variables. This is a key aspect in our data-driven framework. Concentration of measures as in \cite{couillet2022random}, instead, go beyond our interest.

Our contributions can then be summarized as follows:
\begin{enumerate}
\item We establish a distribution-free concentration bound on the span of $F(x^\star_N)$, where $x^\star_N$ minimizes \eqref{eq.sdls} with no \gls{SD} constraints, and it is therefore much easier to solve;
\item Based on the previous result, when the original \gls{SDLS} is designed to learn a quadratic, yet unknown, function, we show how to bound the error between a gradient descent iterate minimizing the surrogate cost obtained with no \gls{SD} constraints and the true minimizer $x^\star$.
\end{enumerate}
Finally, our theoretical results are corroborated on a numerical example addressing a quadratic function fitting problem.

%\subsection{Notation}
%
%
%\subsection{Organization}

\section{Examples of SDLS problems}
To fully motivate the problem addressed in this letter, we discuss next several applications of the \gls{SDLS} problem in \eqref{eq.sdls}.

\begin{example}[Fitting a quadratic function \cite{notarnicola2022distributed}]\label{sec:fitting_quadratic}
Given an unknown, quadratic, positive \gls{SD} function,
$f(x)\coloneqq x^\top Q x + c^\top x + r$, $Q\succcurlyeq 0$, $c\in\R^n$ and $r\in\R$,
the task is to estimate the latter parameters through $(\hat Q, \hat c, \hat r)$, thereby producing $\hat f(x)\coloneqq x^\top \hat Q x + \hat c^\top x + \hat r$, through measurements $y^{(i)} = f(x^{(i)}) + \eta^{(i)}$ collected at $x^{(i)}\in\R^n$ and affected by noise $\eta^{(i)}$, $i \in \{1, \ldots, N\}$.  We can formulate the problem as:
$$
\underset{\hat Q\succcurlyeq0, \hat c, \hat r}{\textnormal{\textrm{min}}}~\frac{1}{N} \sum_{i=1}^N (\hat f(x^{(i)}) - y^{(i)})^2,
% \quad \textrm{subject to}\, . 
$$
%where  the latter being a noisy term giving unbiased measurements. 
which can be readily transformed into:
\begin{equation}\label{eq:fitting}
%	\begin{aligned}
%		&\underset{\xi \in \R^{n^2 + n + 1}}{\textnormal{\textrm{min}}}&&\frac{1}{N} \|A \xi - b\|^2\\
%		&\textnormal{~~~\textrm{ s.t. }}&&F(\xi) \succcurlyeq 0, 
%	\end{aligned}
	\underset{\xi \in \R^{n^2 + n + 1}}{\textnormal{\textrm{min}}}~\frac{1}{N} \|A \xi - b\|^2~\textnormal{\textrm{ s.t. }}~F(\xi) \succcurlyeq 0, 
\end{equation}
for a suitable data matrix $A$, vector $b$, and \gls{LMI} $F$. In general, we may also want to impose $\Lambda(F(\xi))\in [m,L]$, for all $\xi$. %Algorithmically, it also reasonable to add equality constraints enforcing symmetry in $\hat Q$, i.e.,
%$
%F(\xi) = F(\xi)^\top.
%$
%while we will remove the \gls{SD} constraint.
\hfill$\square$
\end{example}

\begin{example}[Kernel ridge regression~\cite{aubin2020hard}]\label{sec:fitting_convex}

%{\color{red} new}
Akin to the above, we now want to estimate some smooth, strongly convex function $f(x)$ through noisy measurements at $x^{(i)}\in\R^n$, $i \in \{1, \ldots, N\}$. By assuming that $f$ belongs to a \gls{RKHS} with kernel function $k(\cdot,\cdot)$ and bounded \gls{RKHS} norm, exploiting the reproducing property makes our goal to finding  coefficients $\alpha \in \R^N$ so that:
\begin{equation}\label{eq:krr}
	\begin{aligned}
		&\underset{\alpha \in \R^N}{\textnormal{\textrm{min}}}&&\frac{1}{N}\|\mathrm{K} \alpha - y \|^2 + \rho \|\alpha\|^2\\
		&~\textnormal{\textrm{ s.t.}}&&\Lambda([\nabla^2 \kappa(x^s)]\cdot\alpha) \in [m,L], \, s \in \mathcal{S},
	\end{aligned}
\end{equation}
where $\mathrm{K}\in\bbS^N$ is the so-called Gram matrix, with entries $\mathrm{K}_{i,j}=k(x^{(i)},x^{(j)})$, and $y=[f(x^{(1)})+\eta^{(1)} \ \cdots \ f(x^{(N)})+\eta^{(N)}]^\top$ obtained by means of noisy samples. By letting $\kappa(x)\coloneqq[k(x,x^{(1)}) \ \cdots \ k(x,x^{(N)})]^\top$, in the constraints we then indicate with the ``dot''  a weighted sum over the third tensor dimension, thereby forcing the eigenvalues of the Hessian associated to the estimate of $f(x)$ within $[m,L]$ at specific points $x^s\in\R^n$, $s \in \mathcal{S}$, to be selected.
%{\color{red} explain better here.. }
%{\color{blue} don't follow the notation for the constraints though...}
\hfill$\square$
\end{example}

\begin{example}[Elasticity and inertia estimation \cite{woodgate1998efficient,manchester2017recursive}]
Given a matrix of data containing applied forces $F$ and one collecting the resulting measured displacements $X$, one wishes to reconstruct the elasticity matrix $K$ as follows:
$$
\underset{K \succcurlyeq 0}{\textnormal{\textrm{min}}}~\frac{1}{2}\|KX - F\|_F^2.
$$
The above can be generalized to estimate the matrix of inertia for rigid bodies, such as satellites, with angular matrix $M$ and torque matrix $T$. For a suitable matrix $R$, we then have:
\[
\underset{J \succ 0}{\textnormal{\textrm{min}}}~\frac{1}{2}\|JM - T\|_F^2~\textnormal{\textrm{ s.t. }}~RJ \geq 0. 
\tag*{$\square$}
\]
%\hfill$\square$
\end{example}

\begin{example}[Covariance fitting \cite{lin2009least}] Given a stochastic linear system $\dot x=A x+B d$, with $A\in\R^{n\times n}$ being a Hurwitz matrix, and $B\in\R^{n\times m}$ mapping disturbance $d\in\R^m$ over the state, the \gls{LS} covariance fitting problem can be formulates as:
$$
\begin{aligned}
	&\underset{X \succcurlyeq 0, H}{\textnormal{\textrm{min}}}&&\frac{1}{2}\|X - \Sigma\|_F^2\\
	&\textnormal{~\textrm{ s.t. }}&&AX + X A^\top = -(BH + H^\top B^\top),
\end{aligned}
$$
where $\Sigma=\tfrac{1}{N}\sum_{i=1}^N x^{(i)}{x^{(i)}}^\top\succcurlyeq 0$ is the sample covariance matrix obtained through $N$-state measurements $x^{(i)}\in\R^n$.
\hfill$\square$
\end{example}

\section{Finite-sample guarantees for \gls{SDLS}}
%{\color{red} put the below somewhere:
%
%
%The fact that $x\in\R^n$ is a vector is not an issue, since for the vectorization operator, one can always write,
%$$
%\|A x - b\|^2 = \|C X - B\|_F^2, \text{ with } x = \mathrm{vec}(X),~b = \mathrm{vec}(B),\text{ and }C = I_n \otimes A. 
%$$
%
%
%}
Next, we establish the main result of our paper, i.e., we derive finite-sample guarantees on the spectrum of $F(x)$ in \eqref{eq.sdls} when the constraints $\Lambda(F(x))\in[m,L]$ is relaxed. To this end, we will consider a simplified version of \eqref{eq.sdls} that does not include any \gls{SD}-type of constraints $\Lambda(F(x))$, namely:
\begin{equation}\label{eq.sdls2}
	%	\begin{aligned}
		\underset{x\in\mathcal X}{\textrm{min}}~\frac{1}{N} \|A x - b\|^2+\rho\|x\|^2.
		%	\end{aligned}
\end{equation}
Let $x^\star_N$ denote the resulting optimal solution, where the subscript emphasizes its data-driven nature, i.e., $x^\star_N=x^\star_N(z^{(1)},\ldots,z^{(N)})$. Note that $x^\star_N$ is inherently random according to the \textit{unknown} distribution $\prob{P}$ underlying data. Our goal is then to establish probabilistic guarantees on how the span of $F(x^\star_N)$ concentrates as $N$ grows. To achieve our goal, we postulate the following working assumptions:
\begin{standing}\label{ass:general}
The \gls{SDLS} in \eqref{eq.sdls}, and hence \eqref{eq.sdls2}, relies on $N$ \gls{iid} samples $\{z^{(i)}\}_{i=1}^N$ drawn according to some unknown probability distribution $\mathbb{P}$, and appropriately stacked into matrix $A$ and vector $b$. The data is noisy but unbiased, and the underlying true process $x^\star$, for which we collect samples $\{z^{(i)}\}_{i=1}^N$ and that we want to estimate, verifies all the constraints in \eqref{eq.sdls}.
\hfill$\square$
\end{standing}
%The row dimension of matrix $A$ and vector $b$, say $N$, clearly augments with the available data.

 Armed with these, we can therefore prove what follows:
\begin{theorem}\label{th:concentration}
	Fix $\delta\in(0,1)$ and $\rho>0$. 
%	Given a collection of $N$-i.i.d. samples $\{z^{(i)}\}_{i=1}^N$ populating $(A,b)$ in \eqref{eq.sdls2}, let $x^\star_N$ be its unique optimal solution. 
	Then, there exists $\varepsilon=\varepsilon(\ell,\delta,N)>0$ such that, with probability at least $1-\delta$,
	\[
		\Lambda(F(x^\star_N))\in\left[m-\varepsilon,~L+\varepsilon\right].
		 \tag*{$\square$}
	\]
%	where $H\coloneqq\begin{bmatrix} 
%		F_1\\
%		\vdots\\
%		F_n
%	\end{bmatrix}[F_1~\ldots~F_n]\in\bbS^{\ell n}_{\succcurlyeq0}$ and $B>0$ an upper bound on both $x$ and the data-based quantities $(A,b)$ that one can obtain from $\{z^{(i)}\}_{i=1}^N$.
%%	\hfill$\square$
\end{theorem}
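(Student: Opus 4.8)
The plan is to detach the spectral statement from a plain norm estimate on $x^\star_N-x^\star$, and then to close that estimate with a matrix concentration inequality. Since $F$ is an \gls{LMI}, $F(x^\star_N)-F(x^\star)=\sum_{j=1}^n\big((x^\star_N)_j-x^\star_j\big)F_j$, so $\norm{F(x^\star_N)-F(x^\star)}\le\gamma\,\norm{x^\star_N-x^\star}$ with $\gamma\coloneqq\big(\sum_{j=1}^n\norm{F_j}^2\big)^{1/2}$ (any fixed norm of the tuple $(F_1,\dots,F_n)$ serves). By Weyl's perturbation inequality, every eigenvalue of $F(x^\star_N)$ lies within $\norm{F(x^\star_N)-F(x^\star)}$ of the corresponding eigenvalue of $F(x^\star)$; as $\Lambda(F(x^\star))\subseteq[m,L]$ by Standing Assumption~\ref{ass:general}, it suffices to produce $\varepsilon$ with $\mathbb{P}\big\{\gamma\,\norm{x^\star_N-x^\star}\le\varepsilon\big\}\ge1-\delta$.

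Next I would control $\norm{x^\star_N-x^\star}$ via the regularized normal equations. Writing $\widehat\Sigma\coloneqq\tfrac1N A^\top A$ and $\widehat c\coloneqq\tfrac1N A^\top b$, and letting $\Sigma,c$ be their population counterparts—so that $x^\star=(\Sigma+\rho I)^{-1}c$ is exactly the noise-free, infinite-sample solution singled out in Standing Assumption~\ref{ass:general}—the first-order conditions give $x^\star_N=(\widehat\Sigma+\rho I)^{-1}\widehat c$, and the resolvent identity $(\widehat\Sigma+\rho I)^{-1}-(\Sigma+\rho I)^{-1}=(\widehat\Sigma+\rho I)^{-1}(\Sigma-\widehat\Sigma)(\Sigma+\rho I)^{-1}$ together with $\norm{(\widehat\Sigma+\rho I)^{-1}}\le1/\rho$ yields
\[
\norm{x^\star_N-x^\star}\ \le\ \tfrac1\rho\,\norm{\widehat c-c}\ +\ \tfrac{\norm{x^\star}}{\rho}\,\norm{\widehat\Sigma-\Sigma}.
\]
If $\mathcal X\subsetneq\R^n$ carries genuine constraints, the same bound (up to a constant) persists: $x^\star_N$ and $x^\star$ are the fixed points of the projected-gradient maps of the empirical objective $\mathcal L_N(x)\coloneqq\tfrac1N\norm{Ax-b}^2+\rho\norm x^2$ in~\eqref{eq.sdls2} and of its population version $\mathcal L$, which are contractions with a common modulus $q<1$ by $2\rho$-strong convexity (given a uniform bound on the regressors), whence $\norm{x^\star_N-x^\star}\le\tfrac{1}{1-q}\norm{\nabla\mathcal L_N(x^\star)-\nabla\mathcal L(x^\star)}$ up to the step size, and $\nabla\mathcal L_N(x^\star)-\nabla\mathcal L(x^\star)=2(\widehat\Sigma-\Sigma)x^\star-2(\widehat c-c)$ reduces matters, once more, to the two deviations $\norm{\widehat\Sigma-\Sigma}$ and $\norm{\widehat c-c}$.

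Finally, both $\widehat\Sigma-\Sigma=\tfrac1N\sum_{i=1}^N\big(a^{(i)}(a^{(i)})^\top-\Sigma\big)$ and $\widehat c-c=\tfrac1N\sum_{i=1}^N\big(a^{(i)}y^{(i)}-c\big)$ are empirical means of \gls{iid}, \emph{centred} (by unbiasedness) random matrices/vectors, so the matrix Bernstein/Hoeffding inequalities of~\cite{tropp2012user}—the vector case being read as an $\ell\times1$ ``matrix''—give tails of the form $\mathbb{P}\big\{\norm{\widehat\Sigma-\Sigma}\ge t\big\}\le2\ell\,e^{-N t^2/(2v)}$, with $v$ a variance proxy and the prefactor $2\ell$ precisely the ambient matrix dimension, and similarly for $\norm{\widehat c-c}$. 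Splitting the confidence budget $\delta$ between the two events, inverting the tails, and absorbing the fixed problem data $(\rho,\gamma,\norm{x^\star},v,\dots)$ into the constants, one obtains an explicit $\varepsilon=\varepsilon(\ell,\delta,N)=O\!\big(\sqrt{\log(\ell/\delta)/N}\big)$, which shrinks with $N$ and inflates with $\ell$ and $1/\delta$, as claimed.

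The delicate point—and where I expect most of the work—is keeping the certificate genuinely distribution-free: the Bernstein bounds require an almost-sure bound (or variance proxy $v$) for the summands $a^{(i)}(a^{(i)})^\top$ and $a^{(i)}y^{(i)}$, which a priori depends on $\mathbb{P}$. I would discharge this either by exploiting that in the motivating problems the regressors are built from user-chosen inputs and can be taken bounded by a \emph{known} scale after a harmless normalization (so the constant is absolute), or, in the spirit of the title, by substituting the empirical variance proxies computed from $A$ and $b$ at the cost of one further, lower-order concentration step; the factor $\norm{x^\star}$ (and the noise second moment hidden in $v$) is likewise swapped for a computable surrogate, e.g. using $\norm{x^\star_N}\le\norm{b}/\sqrt{\rho N}$ (from $\mathcal L_N(x^\star_N)\le\mathcal L_N(0)$) and the triangle inequality.
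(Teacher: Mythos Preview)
Your argument is sound and proves the qualitative statement, but it follows a genuinely different route from the paper. The paper does \emph{not} pass through the normal equations or through matrix Bernstein on $\widehat\Sigma-\Sigma$ and $\widehat c-c$. Instead, it treats $F(x^\star_N)$ directly as a (matrix-valued) function of the \gls{iid} sample $(z^{(1)},\dots,z^{(N)})$ and applies the matrix bounded-differences (McDiarmid-type) inequality \cite[Cor.~7.5]{tropp2012user}. The required sensitivity bound $(F(x^\star_N)-F(x^\star_{N^i}))^2\preccurlyeq\Xi_i^2$ is obtained by writing the difference in Kronecker form and then invoking the algorithmic-stability estimate of Bousquet--Elisseeff \cite[Lemma~21]{bousquet2000algorithmic}, which gives $\norm{x^\star_N-x^\star_{N^i}}\le \sqrt{2}\,B/(\rho N)$ uniformly, for any convex $\mathcal X$. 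Unbiasedness is then used as $\expected{\prob P}{x^\star_N}=x^\star$ to identify $\expected{\prob P}{F(x^\star_N)}=F(x^\star)$, yielding the explicit $\varepsilon=\tfrac{4B}{\rho\sqrt N}\sqrt{\lambda_{\max}(H)\ln(\ell/\delta)}$ with $H=[F_1;\dots;F_n][F_1\ \cdots\ F_n]$.

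The two approaches buy different things. Your decomposition is more elementary and makes the dependence on the data explicit, but (i) the dimension entering your Bernstein tails is $n$ (the size of $\widehat\Sigma$), not $\ell$, so your certificate naturally reads $\varepsilon=O\big(\sqrt{\ln(n/\delta)/N}\big)$ rather than the paper's $\ln(\ell/\delta)$; (ii) your treatment of a nontrivial $\mathcal X$ via projected-gradient fixed points needs a uniform Lipschitz bound on $\nabla\mathcal L_N$ to secure a common contraction modulus, whereas the stability route handles arbitrary convex $\mathcal X$ in one stroke; and (iii) you identify $x^\star$ with the population \emph{regularized} minimizer, while the paper uses the stronger reading of ``unbiased'' as $\expected{\prob P}{x^\star_N}=x^\star$ --- both are defensible under Standing Assumption~\ref{ass:general}, but they are not the same object when $\rho>0$. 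None of these is a fatal gap; they are genuine trade-offs between your moment-based argument and the paper's stability-plus-McDiarmid proof.
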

\begin{proof}
	We use $F(\cdot)$ as an application mapping the $N$ data via $x^\star_N$ into a symmetric matrix of dimension $\ell$, i.e., $F(x^\star_N)=F(x^\star_N(z^{(1)},\ldots,z^{(N)}))$, and then apply \cite[Cor.~7.5]{tropp2012user}.
	
	\sloppy To this end, we first need to prove the bounded difference property for $F(x^\star_N)-F(x^\star_{N^i})=F(x^\star_N(z^{(1)},\ldots,z^{(N)}))-F(x^\star_N(z^{(1)},\ldots,z^{(i-1)},z',z^{(i+1)},\ldots,z^{(N)}))$, where $z'$ is some sample drawn according to $\prob{P}$, \gls{iid} \gls{wrt} the dataset $\{z^{(i)}\}_{i=1}^N$, which replaces the arbitrary $i$-th one. Specifically, we have to identify matrices $\Xi_i\in\bbS^\ell$ so that $(F(x^\star_N)-F(x^\star_{N^i}))^2\preccurlyeq \Xi_i^2$, which allow us to define $\sigma^2=\|\sum_{i=1}^N \Xi_i^2\|$. By analyzing $(F(x^\star_N)-F(x^\star_{N^i}))^2$ more in detail we obtain:
	\[
	\begin{aligned}
		&(F(x^\star_N)-F(x^\star_{N^i}))^2=\left(\left(
		(x^\star_N-x^\star_{N^i})^\top\otimes I_\ell\right) \begin{bmatrix}
			F_1\\
			\vdots\\
			F_n
		\end{bmatrix}\right)^2\\
		&\stackrel{(a)}{=}\left(
		(x^\star_N\!-\!x^\star_{N^i})^\top\!\otimes\! I_\ell\right)\underbrace{\begin{bmatrix}
			F_1\\
			\vdots\\
			F_n
		\end{bmatrix}\begin{bmatrix}
			F_1~\ldots~F_n
		\end{bmatrix}}_{\eqqcolon H}\left(
		(x^\star_N\!-\!x^\star_{N^i})\!\otimes\! I_\ell\right)\\
		&\stackrel{(b)}{\preccurlyeq} \lambda_{\max}(H) \left(
		(x^\star_N-x^\star_{N^i})^\top\otimes I_\ell\right)\left(
		(x^\star_N-x^\star_{N^i})\otimes I_\ell\right)\\
		&\stackrel{(c)}{=}\lambda_{\max}(H) \|x^\star_N-x^\star_{N^i}\|^2 I_\ell,
	\end{aligned}
	\]
	where $(a)$ follows from the symmetry of $F(x)$, for all $x\in\R^n$, $(b)$ since $H\in\bbS^{\ell n}_{\succcurlyeq0}$ by construction, while $(c)$ from standard properties of the Kronecker product. To upper bound the term $\|x^\star_N-x^\star_{N^i}\|$, we rely on admissibility-type of arguments from \cite{bousquet2002stability}. In particular, by denoting with $x^\star_{N-1}$ the optimal solution to \eqref{eq.sdls2} once removed an arbitrary sample from the $N$ available $\{z^{(i)}\}_{i=1}^N$, direct application of \cite[Lemma~21]{bousquet2002stability} to $\|x^\star_N-x^\star_{N^i}\|\le\|x^\star_N-x^\star_{N-1}\|+\|x^\star_{N^i}-x^\star_{N-1}\|$ leads to $\|x^\star_N-x^\star_{N^i}\|^2\le 2B^2/\rho^2N^2$. Here, $B$ denotes some term upper bounding both $x$ and the data-based quantities $(A,b)$ that one can obtain, which happens to be finite in view of Standing Assumption~\ref{ass:general}. Specifically, it follows from i) the well-posedness of the learning problem, which rules out the possibility to collect possibly noisy samples that are unbounded, and ii) the fact that \eqref{eq.sdls2} contains a regularization term singling out a solution even if the underlying feasible set $\mathcal X$ may be unbounded.  Thus, we obtain $\sigma=\sqrt{\|\sum_{i=1}^N\tfrac{2B^2\lambda_{\max}(H)}{\rho^2N^2}I_\ell\|}=\tfrac{B}{\rho}\sqrt{\tfrac{2\lambda_{\max}(H)}{N}}$.
	
	Then, \cite[Cor.~7.5]{tropp2012user} establishes the following relation:
	\begin{equation}\label{eq.tropp_concentration}
		\prob[\lambda_{\max }(F(x^\star_N)-\expected{\prob{P}}{F(x^\star_N)})\le\epsilon]{P^\mathnormal{N}}\geq 1-\ell e^{-\epsilon^2 / 8 \sigma^2}.
	\end{equation}
	Since we have assumed unbiasedness, from the above we directly obtain: $x^\star=\expected{\prob{P}}{x^\star_N}\implies F(x^\star)=F(\expected{\prob{P}}{x^\star_N})=\expected{\prob{P}}{F(x^\star_N)}$ in view of the linearity of $F(\cdot)$. With this,
	\begin{equation}\label{dummy}
	\begin{aligned}
		\lambda_{\max }&(F(x^\star_N)-\expected{\prob{P}}{F(x^\star_N)})\le\epsilon\\
		&\iff\lambda_{\max }(F(x^\star_N)-F(x^\star))\le\epsilon\\
		&\iff F(x^\star_N)-F(x^\star)\preccurlyeq\epsilon I_\ell \\
		&\iff F(x^\star_N)\preccurlyeq\epsilon I_\ell+F(x^\star)\\
		&\implies F(x^\star_N)\preccurlyeq(\epsilon+L) I_\ell\implies\lambda_{\max }(F(x^\star_N))\le\epsilon+L.
	\end{aligned}
	\end{equation}
	Then, by deriving $\epsilon$ from the relation $\delta=\ell e^{-\epsilon^2 / 8 \sigma^2}$, we end up with the following expression:
	\[
	\begin{aligned}
		&\prob[\lambda_{\max }(F(x^\star_N))\le L+2\sigma\sqrt{2\ln\left(\ell/\delta\right)}]{P^\mathnormal{N}}\geq 1-\delta, \text{ i.e.,}\\
		&\prob[\lambda_{\max }(F(x^\star_N))\le L\!+\!\tfrac{4B}{\rho\sqrt{N}}\sqrt{\lambda_{\max}(H)\ln\left(\ell/\delta\right)}]{P^\mathnormal{N}}\geq 1\!-\!\delta,
	\end{aligned}
	\]
	which holds true for arbitrarily high confidence $\delta\in(0,1)$.

%	\ffcomment{Double check the following, or retrieve Andrea's calculations.}
	
	To derive the relation for the ``opposite direction'' involving $\lambda_{\textrm{min}}(F(x^\star_N))$, from the proof of \cite[Cor.~7.5]{tropp2012user} one is able to obtain also the following concentration bound:
	\[
	\prob[\lambda_{\max }(\expected{\prob{P}}{F(x^\star_N)}-F(x^\star_N))\le\epsilon]{P^\mathnormal{N}}\geq 1-\ell e^{-\epsilon^2 / 8 \sigma^2}.
	\]
%This is true since $F$ is arbitrary in~\cite[Cor.~7.5]{tropp2012user} and we can choose $-F$ instead  	
Then, we proceed as in~\eqref{dummy}, with the opposite signs to arrive at:	
%	Then, by exploiting the Weyl's theorem \cite[Th.~4.3.1]{horn2012matrix} we readily obtain $\lambda_{\max }(\expected{\prob{P}}{F(x^\star_N)}-F(x^\star_N))\ge\lambda_{\textrm{min}}(\expected{\prob{P}}{F(x^\star_N)})+\lambda_{\max}(-F(x^\star_N))=\lambda_{\textrm{min}}(F(x^\star))+\lambda_{\max}(-F(x^\star_N))=m+\lambda_{\max}(-F(x^\star_N))$, since $\expected{\prob{P}}{F(x^\star_N)}=F(x^\star)$ and $\lambda_{\textrm{min}}(F(x^\star))=m$. By recalling that for any $S\in\bbS^\ell$, $-\lambda_{\max}(-S)=\lambda_{\textrm{min}}(S)$, we arrive at:
	\[
	\prob[\lambda_{\textrm{min}}(F(x^\star_N))\ge m-\epsilon]{P^\mathnormal{N}}\geq 1-\ell e^{-\epsilon^2 / 8 \sigma^2}.
	\]
	Again, by obtaining $\epsilon$ from $\delta=\ell e^{-\epsilon^2 / 8 \sigma^2}$, and putting everything together we finally arrive at the following expression:
	\[
	\begin{aligned}
		&\mathbb P^\mathnormal{N}\left\{\Lambda(F(x^\star_N))\in\left[m-\tfrac{4B}{\rho\sqrt{N}}\sqrt{\lambda_{\max}(H)\ln\left(\ell/\delta\right)},\right.\right.\\
		&\hspace{2.8cm}\left.\left. L+\tfrac{4B}{\rho\sqrt{N}}\sqrt{\lambda_{\max}(H)\ln\left(\ell/\delta\right)}\right]\right\}\geq 1-\delta.
	\end{aligned}
	\]
	Setting $\varepsilon=\tfrac{4B}{\rho\sqrt{N}}\sqrt{\lambda_{\max}(H)\ln\left(\ell/\delta\right)}$ concludes the proof.
\end{proof}

As a crucial consequence of Theorem~\ref{th:concentration}, instead of solving the \gls{SDLS} in \eqref{eq.sdls}, one can then focus on the program in \eqref{eq.sdls2}, which is easier to solve as it frequently reduces to a \gls{QP}, according to the shape of $\mathcal X$, and still obtain finite-sample guarantees that the spectrum of $F(x^\star_N)$ will span a neighborhood of the desired one, i.e., $[m,L]$, which shrinks as $N$ grows. The established bound is easy-to-compute, distribution-free, and only requires a dataset $\{z^{(i)}\}_{i=1}^N$ with \gls{iid} samples.

\section{Optimization over learned functions}\label{sub:optimization}
%{\color{red} new stuff, maybe to add simulation there too }

Focusing on Example~\ref{sec:fitting_quadratic}, we now want to bound the distance between $x^\star=\textrm{argmin}_{x\in\R^n}~f(x)$, where $f(x)=x^\top Q x + c^\top x + r$ is the unknown function we wish to learn, and the generic iterate $x_{k+1}$ of the gradient descent scheme applied to $\hat f(x)=x^\top \hat Q^\star_N x + \hat{c}_N^{\star^\top} x + \hat r_N^\star$. The latter function is obtained, instead, by solving the variant of \eqref{eq:fitting} without the constraints $\Lambda(F(\xi)) \in [m, L]$, 
yielding  $\xi_N^\star=[\textrm{vec}(\hat Q_N^\star)^\top~\hat{c}_N^{\star^\top}~\hat r_N^\star]^\top$. 

We note that without imposing suitable constraints on the eigenvalues of the Hessian $\hat Q\succcurlyeq0$ does not ensure the convergence for the gradient descent scheme when applied to minimize $\hat f(x)$. Then, let
%We do convex optimization with \gls{RKHS}. We can remove some $S$ constraints or all. Then, gradient descent on the function, {\color{red} the reason is pretty straightforward}
\begin{equation}\label{gds}
x_{k+1} = x_k - \gamma \nabla \hat f(x_k)=x_k - \gamma(\hat Q_N^\star x_k + \hat c_N^\star),
\end{equation}
denoting the generic $k$-th update of the gradient descent method, for given stepsize $\gamma>0$ and initial condition $x_0\in\R^n$. We can claim the following bound based on Theorem~\ref{th:concentration}:
%Would this converge as if we had imposed the constraints ? 
\begin{theorem}\label{th:concentration_optimization}
	Fix $\delta\in(0,1)$ and $\rho>0$. Assume that $x^\star$ exists and is bounded. Then, there exists $\varepsilon=\varepsilon(\ell,\delta,N)>0$ such that, if we select a $N$ that guarantees $m- \varepsilon>0$ with probability $1-\delta$ and we select $\gamma<2/(L+\varepsilon)$, then with the same probability, the iterations~\eqref{gds} converge as follows:
	\[
	\limsup_{k\to \infty}\|x_{k+1} - x^\star\| = O(\varepsilon). \tag*{$\square$}
	\]
\end{theorem}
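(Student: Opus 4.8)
The plan is to specialize Theorem~\ref{th:concentration} to the quadratic-fitting instance of Example~\ref{sec:fitting_quadratic}, and then combine it with a textbook linear-convergence analysis of the recursion~\eqref{gds} and a first-order perturbation bound for linear systems. In Example~\ref{sec:fitting_quadratic} the matrix $F$ returns the (appropriately normalized) estimated Hessian, so $F(\xi_N^\star)=\hat Q_N^\star$ and $F(\xi^\star)=Q$; moreover, the unbiasedness postulated in Standing Assumption~\ref{ass:general} gives $\xi^\star=\expected{\prob{P}}{\xi_N^\star}$, hence, componentwise and using linearity of $F$, $Q=\expected{\prob{P}}{\hat Q_N^\star}$ and $c=\expected{\prob{P}}{\hat c_N^\star}$. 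Finally, since $x^\star$ minimizes $f$ it solves $Qx^\star+c=0$ with the normalization used in~\eqref{gds}, and whenever $\hat Q_N^\star\succ 0$ the unique minimizer of $\hat f$ is $\hat x_N^\star=-(\hat Q_N^\star)^{-1}\hat c_N^\star$.

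First I would invoke Theorem~\ref{th:concentration}: with probability at least $1-\delta$ one has $\Lambda(\hat Q_N^\star)\subseteq[m-\varepsilon,L+\varepsilon]$ with $\varepsilon=\varepsilon(\ell,\delta,N)=O(1/\sqrt{N})$. On this event the standing choice of $N$ makes $m-\varepsilon>0$, so $\hat Q_N^\star\succ 0$ and~\eqref{gds} reduces to the affine recursion $x_{k+1}-\hat x_N^\star=(I_n-\gamma\hat Q_N^\star)(x_k-\hat x_N^\star)$. For $\gamma<2/(L+\varepsilon)$ we have $\|I_n-\gamma\hat Q_N^\star\|=\max\{|1-\gamma(m-\varepsilon)|,\,|1-\gamma(L+\varepsilon)|\}<1$, hence $x_k\to\hat x_N^\star$ geometrically and in particular $\limsup_{k\to\infty}\|x_{k+1}-\hat x_N^\star\|=0$.

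Next I would control $\|\hat x_N^\star-x^\star\|$. Subtracting $Qx^\star=-c$ from $\hat Q_N^\star\hat x_N^\star=-\hat c_N^\star$ and adding/subtracting $\hat Q_N^\star x^\star$ yields $\hat Q_N^\star(\hat x_N^\star-x^\star)=(c-\hat c_N^\star)+(Q-\hat Q_N^\star)x^\star$, so
\[
\|\hat x_N^\star-x^\star\|\;\le\;\frac{\|\hat c_N^\star-c\|+\|\hat Q_N^\star-Q\|\,\|x^\star\|}{m-\varepsilon}.
\]
For the Hessian term this is immediate: the proof of Theorem~\ref{th:concentration} actually delivers the two-sided estimate $-\varepsilon I_\ell\preccurlyeq F(x_N^\star)-F(x^\star)\preccurlyeq\varepsilon I_\ell$, i.e.\ $\|\hat Q_N^\star-Q\|\le\varepsilon$ on the same event. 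For the gradient term one re-runs the identical argument with $F$ replaced by the linear map that extracts the whole $\xi_N^\star$ (equivalently, by embedding any $v\in\R^d$ as a $(d+1)\times(d+1)$ symmetric matrix with operator norm $\|v\|$): the Bousquet--Elisseeff stability bound $\|\xi_N^\star-\xi_{N^i}^\star\|\le\sqrt{2}\,B/(\rho N)$ used in Theorem~\ref{th:concentration} carries over verbatim, so \cite[Cor.~7.5]{tropp2012user} yields $\|\xi_N^\star-\xi^\star\|=O(1/\sqrt{N})$ and thus $\|\hat c_N^\star-c\|\le\varepsilon$ after enlarging $\varepsilon$ by a constant of the same order. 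Intersecting the two probability-$(1-\delta)$ events (absorbing the union-bound factor into $\delta$), and using that $\|x^\star\|$ is bounded by hypothesis while $m-\varepsilon\to m>0$, we obtain $\|\hat x_N^\star-x^\star\|=O(\varepsilon)$.

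The triangle inequality then closes the proof: $\limsup_{k\to\infty}\|x_{k+1}-x^\star\|\le\limsup_{k\to\infty}\|x_{k+1}-\hat x_N^\star\|+\|\hat x_N^\star-x^\star\|=O(\varepsilon)$ on an event of probability at least $1-\delta$. I expect the \emph{main obstacle} to be precisely the bound on $\|\hat c_N^\star-c\|$: Theorem~\ref{th:concentration} as stated only constrains the spectrum of the Hessian block $F(\xi_N^\star)$, so one must recognize that its proof is really a concentration statement about the entire data-driven minimizer $\xi_N^\star$, and redeploy the matrix inequality of \cite{tropp2012user} on a symmetric embedding of $\xi_N^\star$ itself; the remaining ingredients — linear contraction of gradient descent and a perturbation bound for a linear system — are routine.
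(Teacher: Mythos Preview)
Your proposal is correct and uses the same concentration ingredients as the paper (Theorem~\ref{th:concentration} for $\|\hat Q_N^\star-Q\|\le\varepsilon$, a re-run of its proof for $\|\hat c_N^\star-c\|\le\varepsilon$, and the contraction $\|I-\gamma\hat Q_N^\star\|<1$), but the decomposition is organized differently. The paper does not pass through the surrogate minimizer $\hat x_N^\star$: it writes the error recursion directly as
\[
x_{k+1}-x^\star=(I-\gamma\hat Q_N^\star)(x_k-x^\star)+\gamma(Q-\hat Q_N^\star)x^\star+\gamma(c-\hat c_N^\star),
\]
and then sums the resulting geometric series to get $\|x_k-x^\star\|\le\|I-\gamma\hat Q_N^\star\|^k\|x_0-x^\star\|+O(\varepsilon)$. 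Your route---first $x_k\to\hat x_N^\star$, then $\|\hat x_N^\star-x^\star\|\le(m-\varepsilon)^{-1}O(\varepsilon)$---is equivalent and arguably cleaner, since it separates the deterministic convergence from the statistical perturbation; the paper's route is marginally shorter because it avoids introducing $\hat x_N^\star$ and the extra triangle inequality. For the $\hat c_N^\star$ bound, the paper uses the specific embedding $F'(\xi)=\diag(\hat c)$ rather than your antisymmetric/off-diagonal block trick, but both reduce the vector concentration to the matrix inequality of \cite{tropp2012user} in the same way.
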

\begin{proof}
%	{\color{blue} What are $\rho$ and $\xi_i$ below?}\\
	By analyzing the iteration error, we readily obtain:
	\[
	\begin{aligned}
		x_{k+1} - x^\star&=x_k - \gamma(\hat Q_N^\star x_k + \hat c_N^\star)-x^\star\\
		&=x_k - \gamma(\hat Q_N^\star x_k + \hat c_N^\star)-x^\star+ \gamma(Q x^\star + c) \\
		&\hskip-1cm = (I-\gamma \hat Q_N^\star) (x_k - x^*) + \gamma (Q - \hat Q_N^\star)x^\star + \gamma (c-\hat c_N^\star).
	\end{aligned}
	\]
By choosing $\gamma$ and $N$ as specified, we impose that the matrix $(1-\gamma \hat{Q}_N^\star)$ is smaller than one in norm, with probability at least $1-\delta$. Furthermore, the error terms $Q-Q_N^\star$ and $c-\hat c_N^\star$ can be bounded following the proof of Theorem~\ref{th:concentration}. The former is immediate leading to $\|Q-\hat Q_N^\star\| \leq \varepsilon$; the latter is also easy by looking at the diagonal map $F'\coloneqq\textrm{diag}(c)$ and using the same steps to obtain $\|c-\hat c_N^\star\| \leq \varepsilon$. With this, and with probability at least $1-\delta$, 
$$
\|x_{k} - x^\star\|\le \|I-\gamma \hat Q_N^\star\|^k \|x_{0} - x^\star\| + O(\varepsilon),
$$
and the theorem is proven.
\end{proof}

The theorem ensures convergence of the sampled gradient method to an error ball of size proportional to $\varepsilon \propto \frac{1}{\sqrt{N}}\sqrt{\ln(\ell/\delta)}$. The more the points $N$, the smaller the error.  

%{\color{red}
%%	Make it a theorem. 
%	Trade off between how many points you require and how fast you are converging, etc.. 
%
%	The algo's ?
%}

\section{Numerical experiments}
We now corroborate our theoretical results on a numerical instance of Example~\ref{sec:fitting_quadratic}. The simulations are run in MATLAB on a laptop with an Apple M2 chip featuring an 8-core CPU and 16 GB RAM, while Mosek \cite{mosek} has been used as \gls{SDP} solver, implemented in YALMIP environment \cite{Lofberg2004}.

Then, we randomly generate a quadratic function $f(x)$ that we wish to learn. By denoting with $\mathcal U([a,b])$ the uniform distribution over the interval $[a,b]$, we set $Q=U^\top U\succcurlyeq0$, with each entry $U_{ij}\sim\mathcal U([0,1])$, $c\sim\mathcal U([0,1]^n)$, and $r\sim\mathcal U([0,1])$. 

\begin{figure}
	\centering
	\includegraphics[width=.8\columnwidth]{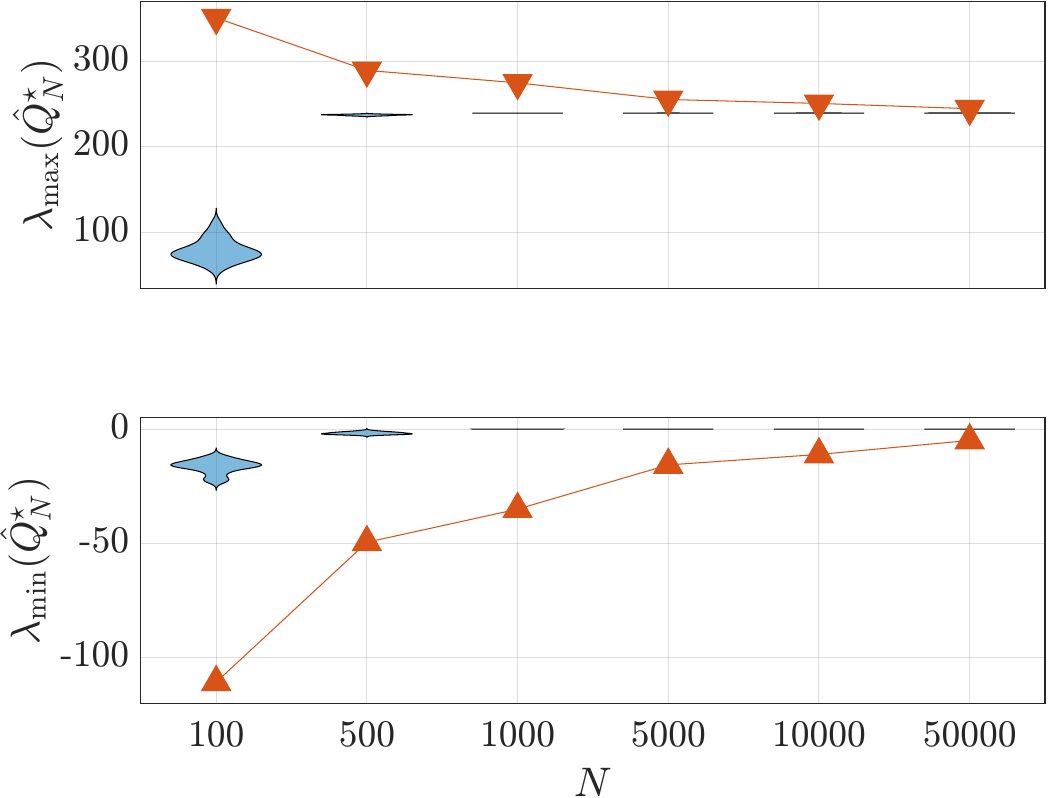}
	\caption{Violin plots reporting the maximum (top figure) and minimum (bottom figure) eigenvalues of $\hat Q^\star_N$ obtained by solving \eqref{eq:fitting}, averaged over $20$ trials with datasets of different size $N$. The red downward (respectively, upward)-pointing triangles denote the upper (resp., lower) bound in Theorem~\ref{th:concentration}.}
	\label{fig:theoretical_bound}
\end{figure}

First, we test how the bound established in Theorem~\ref{th:concentration} changes by considering a dataset of different size $N$. In particular, the latter consists of pairs $(x^{(i)},	f(x^{(i)})+\eta^{(i)})$ with each $x^{(i)}\in\mathcal U([-10,10]^n)$ and $\eta^{(i)}\sim\mathcal N(0,1)$, where $\mathcal N(\mu,\sigma^2)$ denotes the normal distribution with mean $\mu$ and variance $\sigma^2$. Matrix $A$ and vector $b$ in \eqref{eq:fitting} then reads as follows:
\[
\begin{aligned}
	A=\left[\begin{smallmatrix}
		{x^{(1)}}^\top\otimes~{x^{(1)}}^\top & & {x^{(1)}}^\top & 1\\
		&\vdots &&\\
		{x^{(N)}}^\top\otimes~{x^{(N)}}^\top && {x^{(N)}}^\top & 1
	\end{smallmatrix}\right]%%\in\R^{N\times (n^2+n+1)}
	,\, 
	b=\left[\begin{smallmatrix}
		f(x^{(1)})~+~\eta^{(1)}\\
		\vdots\\
		f(x^{(N)})~+~\eta^{(N)}
	\end{smallmatrix}\right].%\in\R^{N}.
\end{aligned}
\]
Notice that $\xi=[\textrm{vec}(\hat Q)^\top~\hat c^\top~\hat r]^\top\in\R^{n^2+n+1}$.
Since we will solve \eqref{eq:fitting} by adding an extra regularization term $\rho\|\xi\|^2$ in the cost and imposing symmetry only, i.e., simple equality constraints $\hat Q=\hat Q^\top$ and no \gls{SD} ones, i.e., $\hat Q\succcurlyeq0$ and $\hat Q\preccurlyeq\lambda_{\max}(Q)$, we are therefore primarily interested in guaranteeing bounds  $\lambda_{\textrm{min}}(\hat Q_N^\star)\ge-\tfrac{4B}{\rho\sqrt{N}}\sqrt{n\ln\left(n/\delta\right)}$ and $\lambda_{\max}(\hat Q_N^\star)\le\lambda_{\max}(Q)+\tfrac{4B}{\rho\sqrt{N}}\sqrt{n\ln\left(n/\delta\right)}$ with probability at least $1-\delta$, since it turns out that, in the considered setting, $\ell=n$ and $\lambda_{\max}(H)=n$. To this end, Fig.~\ref{fig:theoretical_bound} allows one to contrast the theoretical bound established in Theorem~\ref{th:concentration} and the maximum/minimum eigenvalues of $\hat Q_N^\star$ obtained by solving the \gls{QP} in \eqref{eq:fitting}. In this case, the simulation is run by setting $\rho=1$, $n=30$, $\delta=0.05$, and for each $N$ averaged over $20$  different dataset $\{(x^{(i)},	f(x^{(i)})+\eta^{(i)})\}_{i=1}^N$. While one can observe a sort of monotonic behavior for the theoretical bounds, the computed $\lambda_{\max}(\hat Q_N^\star)$ and $\lambda_{\textrm{min}}(\hat Q_N^\star)$ instead do not follow a specific trend, other than producing better (i.e., close to $240.14=\lambda_{\max}(Q)$ and $0=\lambda_{\textrm{min}}(Q)$, respectively) estimates with very little variance for larger datasets.

\begin{figure}
	\centering
	\includegraphics[width=.8\columnwidth]{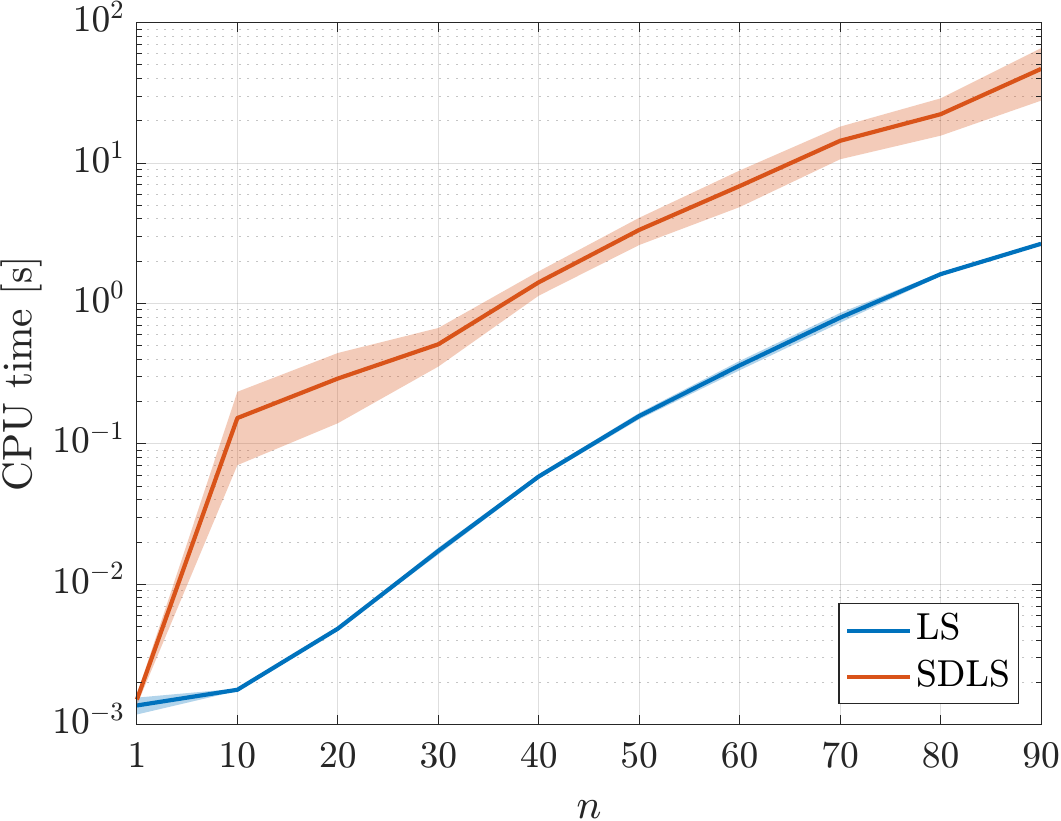}
	\caption{Computational time for solving the \gls{LS} in \eqref{eq:fitting} and related \gls{SDLS} variant, averaged over $20$ different numerical instances.}
	\label{fig:cputime}
\end{figure}

Next, we set $N=5000$ and compare the computational time required to solve the \gls{QP} in \eqref{eq:fitting}, and the related \gls{SDLS} variant including the \gls{SD} constraint $0\preccurlyeq\hat Q\preccurlyeq\lambda_{\max}(Q)$, for different values of $n$. Note that we actually solve (SD)LS from $3$ up to $8191$ decision variables. As illustrated in Fig.~\ref{fig:cputime}, solving the \gls{LS} problem only requires at least an order of magnitude less than solving the associated \gls{SDLS}.

%%{\color{red} Violin plots are better than box plots}
%\subsection{Data-driven optimization bounds}
%%{\color{red} Add the optimization simulations? }
%[...]

\section{Conclusion}
By focusing on data-driven \gls{LS} problems with \gls{SD} constraints, we have derived probabilistic guarantees holding with high confidence on the spectrum of their optimal solutions once these constraints are removed. Our certificate allows one to solve a simple program, which frequently turns into a \gls{QP}, in place of the full \gls{SDLS} problem, while ensuring that the span of the resulting solution is $\varepsilon$-close to that enforced by the \gls{SD} constraints. Consistently, our bound shrinks as the number of data increases, it is distribution-free and only requires \gls{iid} samples.
As a consequence, when \gls{SDLS} is designed to learn an unknown function, we have shown how to bound the error between a gradient descent iterate minimizing the surrogate cost obtained with no SD constraints and the true minimizer.

Along the line of the results in \S \ref{sub:optimization}, future work will explore the link between these bounds and data-driven optimization. 
%%%%%%%%%%%%%%%%%%%%%%%%%%%%%%%%%%
\bibliographystyle{ieeetr}
\bibliography{PaperCollection00.bib}

\end{document}